\newtheorem{proposition}{Proposition}
\begin{document}

\title{$q$-VFCA: $q$-state Vector-valued Fuzzy Cellular Automata}

\author{Yuki Nishida\inst{1}\email{ynishida.cyjc1901@gmail.com}
\and Sennosuke Watanabe\inst{2}
\and Akiko Fukuda\inst{3}
\and Yoshihide Watanabe\inst{4}
}

\institute{Department of Science of Environment and Mathematical Modeling, Doshisha University, 
1-3 Tatara Miyakodani, Kyotanabe-shi, Kyoto 610-0394, Japan
\and
Department of General Education, National Institute of Technology, Oyama College, 
771 Nakakuki, Oyama-shi, Tochigi 323-0806, Japan
\and
Department of Mathematical Sciences, Shibaura Institute of Technology, 
307 Fukasaku, Minuma-ku, Saitama-shi, Saitama 337-8570, Japan
\and
Department of Mathematical Sciences, Doshisha University, 
1-3 Tatara Miyakodani, Kyotanabe-shi, Kyoto 610-0394, Japan
}

\def\received{Received...}

\maketitle

\begin{abstract}
Elementary fuzzy Cellular Automata (CA) are known as continuous counterpart of elementary CA,
which are 2-state CA, via the polynomial representation of local rules.
In this paper, we first develop a new fuzzification methodology for $q$-state CA.
It is based on the vector representation of $q$-state CA, that is,
the $q$-states are assigned to the standard basis vectors of the $q$-dimensional real space
and the local rule can be expressed by a tuple of $q$ polynomials.
Then, the $q$-state vector-valued fuzzy CA are defined by expanding the set of the states
to the convex hull of the standard basis vectors in the $q$-dimensional real space.
The vector representation of states enables us to 
enumerate the number-conserving rules of $3$-state vector-valued fuzzy CA
in a systematic way.
\end{abstract}

\keywords{cellular automata, fuzzy cellular automata, vector-valued cellular automata, 
conservation law, number-conserving rule, periodic boundary condition}

\section{Introduction}

One-dimensional Cellular Automata (CA) are linearly arranged arrays of cells that evolve simultaneously in accordance with the local update rule depending only on their neighboring cells. 
Although rules of CA are very simple, they provide 
surprisingly rich applications and knowledges \cite{Wolfram},
e.g., traffic dynamics \cite{Habel}, evacuation process \cite{Spartalis}, cryptography \cite{Karmakar}, 
project scheduling \cite{Shimura}, image processing \cite{Tourtounis}, urban planning \cite{Aburas}, and so on.
Simple examples of CA are 2-state 3-neighbor CA, 
called Elementary CA (ECA).
Cattaneo et al.~\cite{Cattaneo} apply the fuzzification to 
the boolean operators in the disjunctive normal form of 
the local rule of ECA
and obtain a kind of continuous CA 
whose cells have states in $[0,1]$. 
Such CA are called elementary fuzzy CA and
their local rules are polynomials of the states of the neighbors.
The asymptotic behaviors of elementary fuzzy CA are studied for 
 cases with a single seed in a zero background~\cite{Mingarelli2006a}
 and for periodic cases~\cite{Betel}.
There are more detailed studies on dynamics for some specific rules, e.g., 
rule 90~\cite{Yacoubi2008,Yacoubi2011a,Flocchini2000}, rule 110~\cite{Mingarelli2003}, rule 184~\cite{Mingarelli2006b}.
The algebraic approach for elementary fuzzy CA using invariant theory is presented in~\cite{Yacoubi2011b}. 
However, to the best of our knowledge, $q$-state fuzzy CA have not been successfully formulated yet. 
For example, if we try to fuzzify 3-state CA, 
we can simply come up with expressing local rules by polynomials along a similar way to elementary fuzzy CA and 
expanding the states from $\{0,1,2\}$ to the continuous values in $[0,2]$. 
Although local rules of elementary fuzzy CA are closed in $[0,1]$, 
local rules of 3-state fuzzy CA constructed above are not closed in $[0,2]$. 
Moreover, the state ``1''  is the middle value of ``0'' and ``2'', which means that a state can be expressed by other states. 
\par
In this paper, we develop a new expression of $q$-state $n$-neighbor fuzzy CA.
To treat the $q$ states independently, 
we represent the $q$ states 
by the standard basis vectors of $\mathbb{R}^q$ instead of
$\{0,1, \dots, q-1\}$. 
Then, the states of the corresponding fuzzy CA are
in the $(q-1)$-simplex whose vertices are 
the standard basis vectors of $\mathbb{R}^q$.
In the case with $q=3$, states of the 3-state fuzzy CA belong to the interior or the boundary of the regular triangle with vertices $(1,0,0)^{\top}$, $(0,1,0)^{\top}$,
$(0,0,1)^{\top}$. 
We call such CA $q$-state Vector-valued Fuzzy CA ($q$-VFCA). 
The vector representation of CA reminds us the quantum CA~\cite{Grossing}. 
They are also continuous valued CA, 
but they do not have the discrete counterparts.
The $q$-VFCA in this paper are 
based on the conventional CA with $q$ discrete states.
Hence, we have one-to-one correspondence 
between the $q$-VFCA and the usual $q$-state CA.
Other CA represented by three-dimensional vectors are  
considered in the studies on the image processing or the encryption of 
RGB color images \cite{Faraoun,Ioannidis}, but they do not come from 3-state CA.
We also remark that fuzzy CA in this paper does not mean CA on fuzzy sets~\cite{Mraz}
or fuzzy choice of local rules~\cite{Adamatzky}.
\par
The existence of the conserved quantities  
is one of the fundamental problem in the study of periodic CA.
The additive conserved quantities of ECA and 
the elementary reversible CA are investigated in~\cite{Hattori}.
In the case of ECA, an example of the additive quantity is the sum of the states of the cells in a period, 
which is identical to the number of cells with state ``1''.
Fuk\'s and Sullivan~\cite{Fuks} give a combinatorial characterization of 
the number-conserving CA rules and compute the number of such rules for
$q$-state $n$-neighbor CA.
This ``number-conserving'' means that the state ``$k$'' is equal to the numerical value $k$ and the sum of these numerical values is conserved.
On the other hand, $q$-state vector-valued CA we focus on enable us to enumerate the rules that conserve the number of the cells with the specific state ``$k$'' by taking the sum of the $k$th entries of all vectors as the additive quantity.
The concept of number-conserving rules can be extended from $q$-state vector-valued CA to $q$-VFCA,
as shown in~\cite{Betel2011} for elementary fuzzy CA.
\par
The present paper is organized as follows. 
In Section 2, we give definitions of ECA and elementary fuzzy CA.
In Section 3, $q$-state $n$-neighbor vector-valued fuzzy CA are introduced.
Although the method presented in this paper can be applied to $q$-state $n$-neighbor vector-valued fuzzy CA,
we describe $3$-state $3$-neighbor vector-valued fuzzy CA for the convenience of the notations.
In Section 4, we enumerate all the number-conserving rules of $3$-state $3$-neighbor vector-valued fuzzy CA,
where similar computation is applicable to any other additive conserved quantities.

\section{Elementary CA and elementary fuzzy CA}

In this section, definitions and notations of ECA and elementary fuzzy CA are given. 
\par
We denote the set of the states by $Q$. 
The neighboring cells of the cell $i \in \mathbb{Z}$ are given by the set
$N(i) = \{ i - n_{\ell}, i - n_{\ell} + 1, \dots, i, \dots, i + n_r \}$.
CA are called $q$-state $n$-neighbor CA if $|Q|=q$ and $|N(i)|=n$.
Let $x_{i}^{t}\in Q$ denote the state of the cell $i$ at the time $t\in\mathbb{Z}_{\geq0}$.
The local rule $f: Q^n \to Q$ determines the evolutions of the cells by
\begin{align*}
x_i^{t+1} = f(x_{i-n_{\ell}}^t, \dots, x_i^t, \dots, x_{i+n_{r}}^t), \quad i \in \mathbb{Z},\quad  t \in\mathbb{Z}_{\geq0}.
\end{align*}
\par
We describe elementary fuzzy CA introduced in~\cite{Cattaneo}.
Let us consider the ECA rule defined by Table \ref{tab-eca}.
The local rule $h: \{0,1\}^3 \to \{0,1\}$ is expressed by the polynomial
obtained from the disjunctive normal form in boolean operations:
\begin{equation} \label{eq-efca}
\begin{aligned}
h(x,y,z) =&b_{111}xyz + b_{110} xy(1-z) + b_{101} x(1-y)z \\
&+ b_{100} x(1-y)(1-z) +b_{011} (1-x)yz \\
&+ b_{010} (1-x)y(1-z) + b_{001} (1-x)(1-y)z \\
&+ b_{000} (1-x)(1-y)(1-z).
\end{aligned}
\end{equation}
For example, the polynomial expression of the local rule of
 ECA rule 184 given in Table \ref{tab-184} is
\begin{align*}
h(x,y,z) &=xyz+x(1-y)z + x(1-y)(1-z) + (1-x)yz\\
&=x -xy + yz.
\end{align*}
If we expand the domain of $h$ to $[0,1]^{3}$ via the polynomial \eqref{eq-efca},
then we can easily check that the obtained function, denoted by $f$, 
satisfies $f([0,1]^{3}) \subset [0,1]$.
The CA defined by the local rule $f: [0,1]^3 \to [0,1]$
are called elementary fuzzy CA.
\begin{table}[t]
\begin{center}
\begin{tabular}{|c||c|c|c|c|c|c|c|c|}\hline
$xyz$ & $111$ & $110$ & $101$ & $100$ & $011$ & $010$ & $001$ & $000$ \\ \hline
$h(x,y,z)$ & $b_{111}$ & $b_{110}$ & $b_{101}$ & $b_{100}$
             & $b_{011}$ & $b_{010}$ & $b_{001}$ & $b_{000}$
\\ \hline
\end{tabular}
\end{center}
\caption{The local rule $h$ of ECA.} \label{tab-eca}
\end{table}
\begin{table}[t]
\begin{center}
\begin{tabular}{|c||c|c|c|c|c|c|c|c|}\hline
$xyz$ & $111$ & $110$ & $101$ & $100$ & $011$ & $010$ & $001$ & $000$ \\ \hline
$h(x,y,z)$ & $1$ & $0$ & $1$ & $1$ & $1$ & $0$ & $0$ & $0$
\\ \hline
\end{tabular}
\end{center}
\caption{ECA rule 184.} \label{tab-184}
\end{table}
\section{$\lowercase{q}$-state CA and $\lowercase{q}$-state fuzzy CA}
We next consider the fuzzification of $q$-state CA.
The states ``0'',``1'', $\dots$, ``$q-1$'' in the conventional CA are just symbols, not numerical values, that is, 
the states $0, 1, 2, \dots$ can be replaced with $A, B, C, \dots$, for example. 
However, if we set the state to the continuous value in $[0,q-1]$ in the fuzzification process, 
then each state can not be regarded as an independent state. 
So we consider the new formulation that the states $0,1,\dots,q-1$ are completely  independent states.
\par
A simple way to express $q$ independent states
is assigning them to the standard basis vectors 
$\bm{e}_1, \bm{e}_2, \dots, \bm{e}_q$ of $\mathbb{R}^q$.
We call the CA with $Q=\{\bm{e}_1, \bm{e}_2, \dots, \bm{e}_q\}$
$q$-state Vector-valued CA ($q$-VCA).
In this section, we construct $q$-state Vector-valued Fuzzy CA ($q$-VFCA).
To avoid the use of complicated indices,
we explain the case with $q=3$ and $n=3$,
although we can easily construct $q$-VFCA with any $q$ or $n$ in the same way.
\par
As in the case of ECA, we use the polynomial expression of the local rule for
the fuzzification of $3$-VCA.
The local rule $h:\{ \bm{e}_1,\bm{e}_2,\bm{e}_3 \}^{3} \to \{ \bm{e}_1,\bm{e}_2,\bm{e}_3 \}$ of 
$3$-VCA is expressed by the tuple of homogeneous polynomials of degree 3 as
\begin{align}
h(\bm{x},\bm{y},\bm{z}) &= \sum_{j,k,{\ell} = 1}^3 x_jy_kz_{\ell} h(\bm{e}_j,\bm{e}_k,\bm{e}_{\ell})\notag \\
&=\left(
\begin{array}{c}
\displaystyle\sum_{h(\bm{e}_j,\bm{e}_k,\bm{e}_{\ell}) = \bm{e}_1} x_j y_k z_{\ell}\\
\displaystyle\sum_{h(\bm{e}_j,\bm{e}_k,\bm{e}_{\ell}) = \bm{e}_2} x_j y_k z_{\ell}\\
\displaystyle\sum_{h(\bm{e}_j,\bm{e}_k,\bm{e}_{\ell}) = \bm{e}_3} x_j y_k z_{\ell}
\end{array}
 \right), \label{eq-3vca}
\end{align}
where $\bm{x} = (x_1,x_2,x_3)^{\top}, \bm{y} = (y_1,y_2,y_3)^{\top}, \bm{z} = (z_1,z_2,z_3)^{\top}$.
Indeed, each monomial $x_jy_kz_{\ell}$ vanishes unless $\bm{x} = \bm{e}_j, \bm{y} = \bm{e}_k$ and $\bm{z} = \bm{e}_{\ell}$.
Conversely, each map $h: \{\bm{e}_1,\bm{e}_2,\bm{e}_3\}^{3} \to \{\bm{e}_1,\bm{e}_2,\bm{e}_3\}$ of the form
\begin{align*}
h(\bm{x},\bm{y},\bm{z}) = \left( 
\begin{array}{c}
\displaystyle\sum_{j,k,{\ell}=1}^3 a_{jk{\ell}}x_jy_kz_{\ell}\\
\displaystyle\sum_{j,k,{\ell}=1}^3 b_{jk{\ell}}x_jy_kz_{\ell}\\
\displaystyle\sum_{j,k,{\ell}=1}^3 c_{jk{\ell}}x_jy_kz_{\ell}
\end{array}
\right),
\end{align*}
where $(a_{jk{\ell}},b_{jk{\ell}},c_{jk{\ell}})^{\top} \in \{\bm{e}_1,\bm{e}_2,\bm{e}_3\}$,
is the local rule of some $3$-VCA.
\par
Let $\Delta$ be the triangle in the three-dimensional space whose vertices are $\bm{e}_1, \bm{e}_2$ and $\bm{e}_3$, i.e.,
\begin{align*}
\Delta = \{ (x_1,x_2,x_3)^{\top} \ |\ x_1+x_2+x_3 = 1, x_i \geq 0, i=1,2,3 \}.
\end{align*}
Now we expand the domain of $h$ to $\Delta^{3}$ via \eqref{eq-3vca}.
In the following proposition, we prove that the obtained function, denoted by $f$, 
satisfies $f(\Delta^{3}) \subset \Delta$.
\begin{proposition}
For any $\bm{x}, \bm{y}, \bm{z} \in \Delta$, the summation of the entries of the vector
$f(\bm{x}, \bm{y}, \bm{z})$ is equal to 1, namely, 
\begin{align}
 \sum_{h(\bm{e}_j,\bm{e}_k,\bm{e}_{\ell}) = \bm{e}_1} x_j y_k z_{\ell}
+ \sum_{h(\bm{e}_j,\bm{e}_k,\bm{e}_{\ell}) = \bm{e}_2} x_j y_k z_{\ell}
+ \sum_{h(\bm{e}_j,\bm{e}_k,\bm{e}_{\ell}) = \bm{e}_3} x_j y_k z_{\ell}
=1. \label{sum}
\end{align}
\end{proposition}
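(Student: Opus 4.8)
The plan is to exploit the fact that the three sums on the left-hand side of \eqref{sum} together partition a single unconstrained sum. First I would observe that, since $h$ is a genuine function into $\{\bm{e}_1,\bm{e}_2,\bm{e}_3\}$, every triple $(j,k,\ell)\in\{1,2,3\}^3$ satisfies exactly one of the three conditions $h(\bm{e}_j,\bm{e}_k,\bm{e}_\ell)=\bm{e}_1$, $h(\bm{e}_j,\bm{e}_k,\bm{e}_\ell)=\bm{e}_2$, $h(\bm{e}_j,\bm{e}_k,\bm{e}_\ell)=\bm{e}_3$. Hence the three index sets governing the three sums in \eqref{sum} form a disjoint partition of the full index set $\{1,2,3\}^3$, and each monomial $x_jy_kz_\ell$ is counted exactly once across the three summations.

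Consequently, the left-hand side of \eqref{sum} collapses to a single sum over all indices, which then factors as a product of three independent one-variable sums:
\begin{align*}
\sum_{j,k,\ell=1}^3 x_j y_k z_\ell
= \left( \sum_{j=1}^3 x_j \right)\left( \sum_{k=1}^3 y_k \right)\left( \sum_{\ell=1}^3 z_\ell \right).
\end{align*}
Finally, I would invoke the hypothesis $\bm{x},\bm{y},\bm{z}\in\Delta$. By the defining condition of $\Delta$, each of the three factors equals $1$, so the product equals $1$, which is exactly the identity \eqref{sum}.

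The argument is essentially a bookkeeping identity rather than a computation, so there is no serious technical obstacle. The one point requiring care—and the conceptual crux of the proof—is the partition observation, namely verifying that no monomial $x_jy_kz_\ell$ is omitted or double-counted when the three sums are combined. This follows at once from $h$ being single-valued on the basis triples, which is guaranteed by the very form of the local rule displayed in \eqref{eq-3vca}, where the three numerator sums are indexed by the preimages $h^{-1}(\bm{e}_1)$, $h^{-1}(\bm{e}_2)$, $h^{-1}(\bm{e}_3)$.
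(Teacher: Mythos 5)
Your proof is correct and follows essentially the same route as the paper: the paper likewise notes that each monomial $x_jy_kz_\ell$ appears exactly once across the three sums (since $h$ is single-valued, the index sets partition $\{1,2,3\}^3$), factors the total sum as $(x_1+x_2+x_3)(y_1+y_2+y_3)(z_1+z_2+z_3)$, and concludes from $\bm{x},\bm{y},\bm{z}\in\Delta$ that this product equals $1$.
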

\begin{proof}\normalfont
Since each monomial $x_jy_kz_{\ell}$ appears exactly once in the summands of the left-hand side of \eqref{sum}, we have
\begin{align*}
&\sum_{h(\bm{e}_j,\bm{e}_k,\bm{e}_{\ell}) = \bm{e}_1} x_j y_k z_{\ell}
+ \sum_{h(\bm{e}_j,\bm{e}_k,\bm{e}_{\ell}) = \bm{e}_2} x_j y_k z_{\ell}
+ \sum_{h(\bm{e}_j,\bm{e}_k,\bm{e}_{\ell}) = \bm{e}_3} x_j y_k z_{\ell} \\
&\qquad = (x_1 + x_2 + x_3)(y_1 + y_2 + y_3)(z_1+z_2+z_3) \\
&\qquad =  1.
\end{align*}
The last equality follows from $\bm{x}, \bm{y}, \bm{z} \in \Delta$.
\end{proof}
$3$-state Vector-valued Fuzzy CA ($3$-VFCA) are continuous CA 
whose state set is $Q=\Delta$ and
local rule is given by $f$.
\par
To visualize the evolution of $3$-VFCA, we use the RGB color system.
The states $\bm{e}_1$, $\bm{e}_2$ and $\bm{e}_3$ are associated with red, green and blue, respectively.
An inner point of $\Delta$ is expressed by the mixture of the three colors, which is illustrated in Figure \ref{color}.
Space-time diagrams of $3$-VCA and $3$-VFCA are shown in Figure \ref{stdiagram}.
They correspond to rule 6213370633633 for usual $3$-state $3$-neighbor CA,
where the rule numbers are the decimal number converted from 27 digits ternary number
of $3^{27}$ rules.
\begin{figure}[t]
\begin{center}
\scalebox{0.5}{\includegraphics{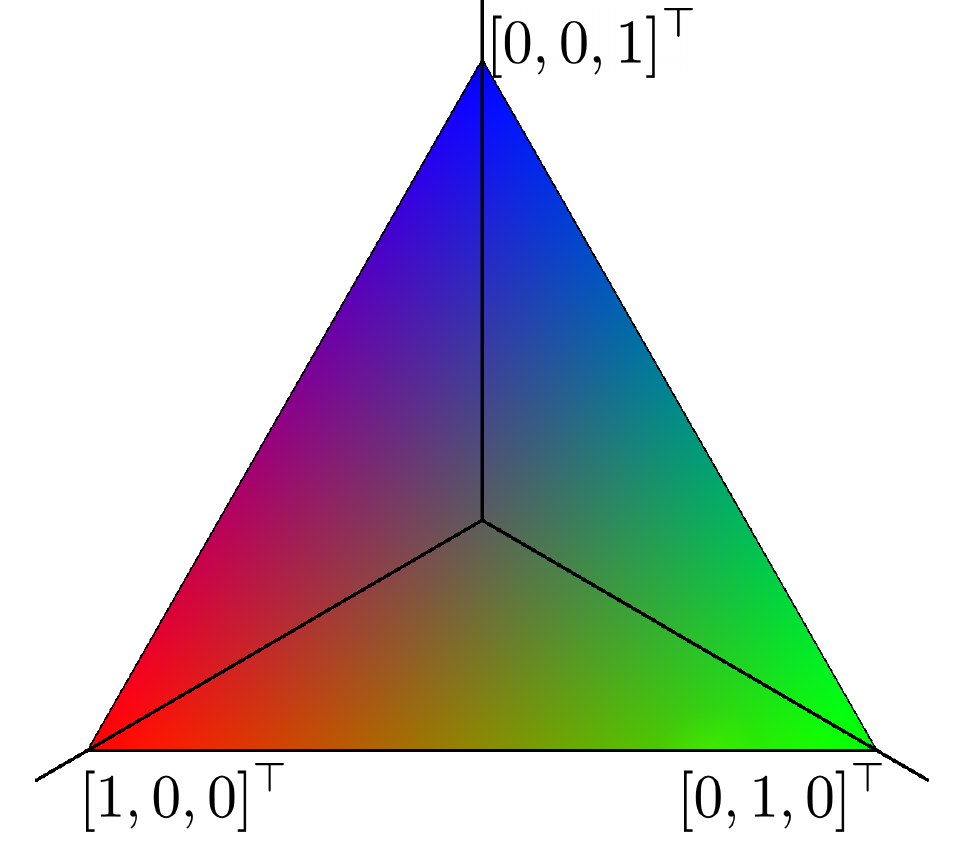}}
\end{center}
\caption{The color distribution on $\Delta$.}
\label{color}
\end{figure}
\begin{figure}[t]
\begin{center}
	\begin{minipage}{50mm}
	\begin{center}
	\scalebox{0.55}{\includegraphics{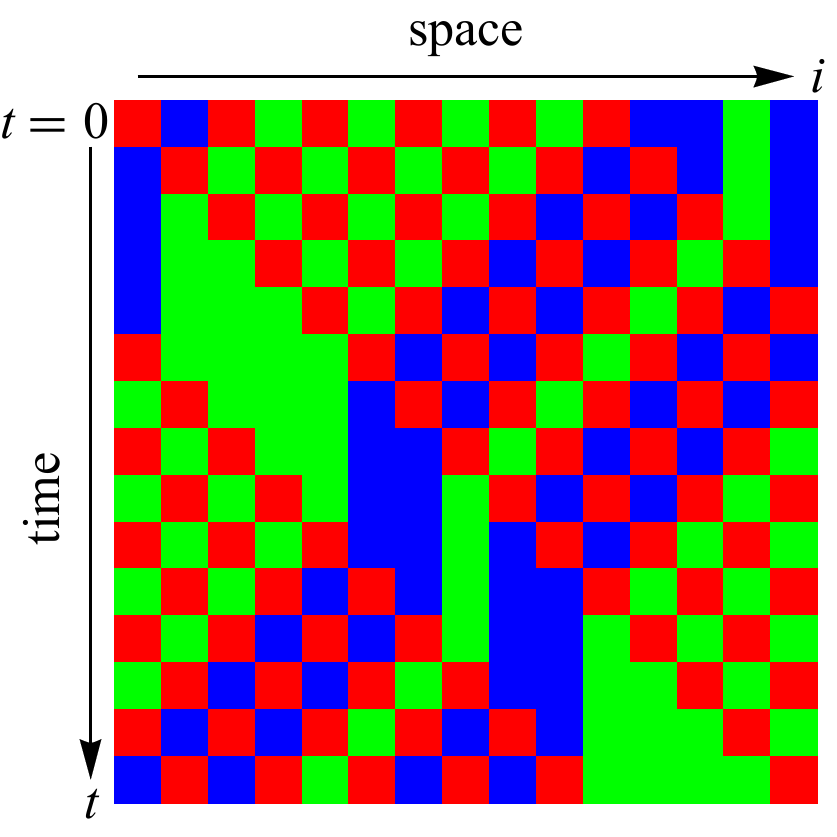}}
	\end{center}
	\end{minipage}
\hspace{3mm}
	\begin{minipage}{50mm}
	\begin{center}
	\scalebox{0.55}{\includegraphics{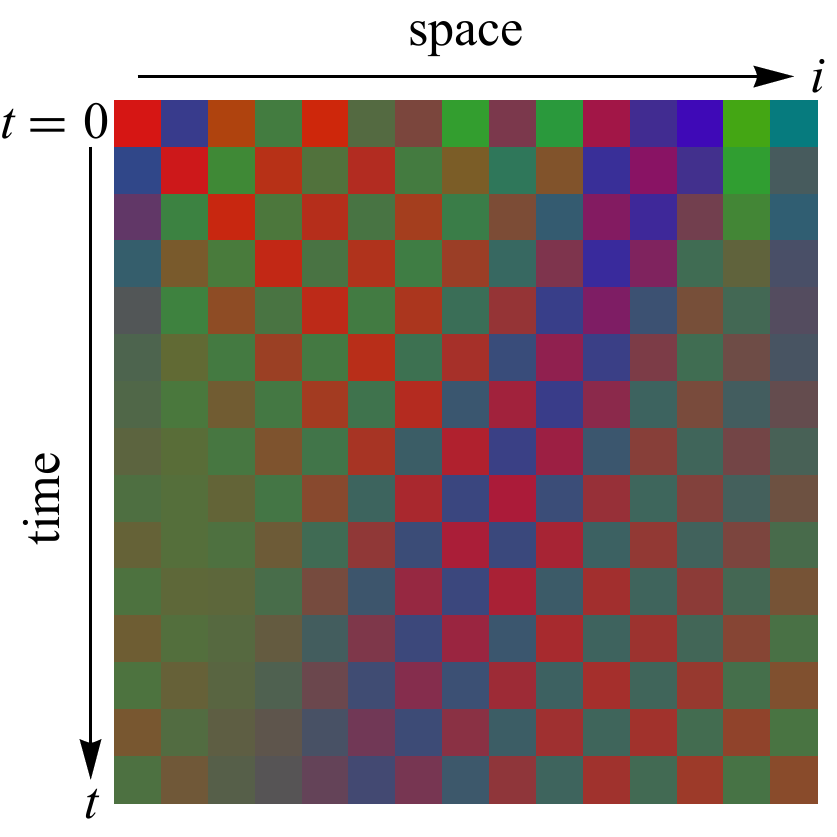}}
	\end{center}
	\end{minipage}
\end{center}
\caption{Space-time diagrams of $3$-VCA (left) and $3$-VFCA (right).}
\label{stdiagram}
\end{figure}
\section{Additive conserved quantities of $\lowercase{q}$-VFCA}
For $q$-VCA or $q$-VFCA,
let $\bm{x}^{t}_{i} = ( [\bm{x}^{t}_{i}]_{1}, [\bm{x}^{t}_{i}]_{2}, \dots, [\bm{x}^{t}_{i}]_{q} )^{\top}$ 
denote the state of the cell $i \in \mathbb{Z}$ at the time $t \in \mathbb{Z}_{\geq0}$.
In this section, we consider periodic CA with the positive integer period $L$, i.e.,
we assume $\bm{x}^{t}_{i} = \bm{x}^{t}_{i+L}$ for $i \in \mathbb{Z}$ and $t \in \mathbb{Z}_{\geq0}$.
\par
For a function $F: (\mathbb{R}^{q})^{p} \to \mathbb{R}$, the sum of the form
\begin{align*}
\Phi^t = \sum_{i=1}^{L} F( \bm{x}^{t}_{i}, \bm{x}^{t}_{i+1}, \dots, \bm{x}^{t}_{i+p-1} ) 
\end{align*}
is called an additive quantity of $q$-VCA or $q$-VFCA.
An additive quantity $\Phi^{t}$ is called an additive conserved quantity if it satisfies $\Phi^{t+1} = \Phi^{t}$ 
for any $t \in \mathbb{Z}_{\geq 0}$ and any initial configuration $(\bm{x}^{0}_{1}, \bm{x}^{0}_{2}, \dots, \bm{x}^{0}_{L})$. 
An example of additive quantities for $q$-VCA is the number of the cells in a period 
whose states are $\bm{e}_{k}$, which is counted as 
\begin{align*}
\nu^{t} (k) = \sum_{i=1}^{L} [\bm{x}^{t}_{i}]_{k}.
\end{align*}
These additive quantities can be extended for $q$-VFCA.
A rule of $q$-VCA or $q$-VFCA is called a $k$-number-conserving rule if $\nu^{t} (k)$ is the additive conserved quantity.
In particular, a rule is called a complete number-conserving rule if $\nu^{t} (k)$ is the additive conserved quantity
for all $k=1,2,\dots,q$.
\par
The vector representation of $q$-state CA or $q$-state fuzzy CA enables us to enumerate the rules
that admit the various kinds of additive conserved quantities.
We now demonstrate the enumeration of the number-conserving rules of $3$-VFCA with three neighboring cells.
Note that the computation below shows that a rule of $3$-VFCA is $k$-number-conserving
if and only if the corresponding $3$-VCA is $k$-number-conserving.
\par
Recall that the local rule of $3$-VFCA is of the form
\begin{align}
f(\bm{x},\bm{y},\bm{z}) = \left( 
\begin{array}{c}
 \displaystyle \sum_{j,k,\ell=1}^{3} a_{jk\ell} x_{j} y_{k} z_{\ell} \\
 \displaystyle \sum_{j,k,\ell=1}^{3} b_{jk\ell} x_{j} y_{k} z_{\ell} \\
 \displaystyle \sum_{j,k,\ell=1}^{3} c_{jk\ell} x_{j} y_{k} z_{\ell}
\end{array}
\right), \label{eq-fuzzylocal}
\end{align}
where $(a_{jk\ell},b_{jk\ell},c_{jk\ell})^{\top} \in \{\bm{e}_1,\bm{e}_2,\bm{e}_3\}$.
If the rule of $3$-VFCA defined by the local rule $f$ is 1-number-conserving, we have
\begin{equation} \label{eq-consv1}
\sum_{i=1}^{L} [\bm{x}^{t}_{i}]_{1} = \sum_{i=1}^{L} [\bm{x}^{t+1}_{i}]_{1} = 
\sum_{i=1}^{L} \sum_{j,k,\ell=1}^{3} a_{jk\ell} [\bm{x}^{t}_{i-1}]_{j} [\bm{x}^{t}_{i}]_{k} [\bm{x}^{t}_{i+1}]_{\ell}.
\end{equation}
Using the identities
\begin{align*}
[\bm{x}^{t}_{i}]_{1} + [\bm{x}^{t}_{i}]_{2} + [\bm{x}^{t}_{i}]_{3} = 1, \quad i=1,2,\dots,L,
\end{align*}
and the periodic conditions
\begin{align*}
&\sum_{i=1}^{L} [\bm{x}^{t}_{i-1}]_{k} = \sum_{i=1}^{L} [\bm{x}^{t}_{i}]_{k} = \sum_{i=1}^{L} [\bm{x}^{t}_{i+1}]_{k}, \quad k=1,2,3, \\
&\sum_{i=1}^{L} [\bm{x}^{t}_{i-1}]_{k} [\bm{x}^{t}_{i}]_{\ell} = \sum_{i=1}^{L} [\bm{x}^{t}_{i}]_{k} [\bm{x}^{t}_{i+1}]_{\ell} , \quad k,\ell=1,2,3,
\end{align*}
the most right-hand side of \eqref{eq-consv1} is computed as 
\begin{align*}
&a_{333} S(0,0,0) \\
&+ (a_{133} + a_{313} + a_{331} - 3a_{333}) S(0,1,0) \\
&+ (a_{233} + a_{323} + a_{332} - 3a_{333}) S(0,2,0) \\
&+ (a_{113} - a_{133} + a_{311} - 2a_{313} - a_{331} + 2a_{333}) S(1,1,0) \\
&+ (a_{213} - a_{233} - a_{313} + a_{321} - a_{323} - a_{331} + 2a_{333}) S(2,1,0) \\
&+ (a_{123} - a_{133} + a_{312} - a_{313} -a_{323} - a_{332} + 2a_{333}) S(1,2,0) \\
&+ (a_{223} - a_{233} + a_{322} - 2a_{323} - a_{332} + 2a_{333}) S(2,2,0) \\
&+ (a_{131} - a_{133} -a_{331} + a_{333}) S(1,0,1) \\
&+ (a_{231} - a_{233} - a_{331} + a_{333}) S(2,0,1) \\
&+ (a_{132} - a_{133} - a_{332} + a_{333}) S(1,0,2) \\
&+ (a_{232} - a_{233} - a_{332} + a_{333}) S (2,0,2) \\
&+ (a_{111} - a_{113} - a_{131} + a_{133} - a_{311} + a_{313} + a_{331} - a_{333})  S(1,1,1) \\
&+ (a_{211} - a_{213} - a_{231} + a_{233} - a_{311} + a_{313} + a_{331} - a_{333})  S(2,1,1) \\
&+ (a_{121} - a_{123} - a_{131} + a_{133} - a_{321} + a_{323} + a_{331} - a_{333})  S(1,2,1) \\
&+ (a_{221} - a_{223} - a_{231} + a_{233} - a_{321} + a_{323} + a_{331} - a_{333}) S(2,2,1) \\
&+ (a_{112} - a_{113} - a_{132} + a_{133} - a_{312} + a_{313} + a_{332} - a_{333}) S(1,1,2) \\
&+ (a_{212} - a_{213} - a_{232} + a_{233} - a_{312} + a_{313} + a_{332} - a_{333})  S(2,2,1) \\
&+ (a_{122} - a_{123} - a_{132} + a_{133} - a_{322} + a_{323} + a_{332} - a_{333}) S(1,2,2) \\
&+ (a_{222} - a_{223} - a_{232} + a_{233} - a_{322} + a_{323} +a_{332} - a_{333}) S(2,2,2).
\end{align*}
Here, 
$S(j,k,\ell) = \sum [\bm{x}^{t}_{i-1}]_{j} [\bm{x}^{t}_{i}]_{k} [\bm{x}^{t}_{i+1}]_{\ell}$ for $j,k,\ell = 0,1,2,$ 
$[\bm{x}^{t}_{i}]_{0} = 1$,
and
each summation is taken over $i=1,2,\dots,L$.
Regarding $[\bm{x}^{t}_{i}]_{1}, [\bm{x}^{t}_{i}]_{2}, i=1,2,\dots,L,$ as independent indeterminates and comparing the coefficients in \eqref{eq-consv1},
we have 19 equations.
Solving them under the condition $a_{jk\ell} \in \{0,1\}$, we obtain nine solutions for $(a_{111},a_{112},\ldots,a_{333})$,
see Appendix A.
If $a_{jk\ell} = 1$, then we have $b_{jk\ell} = c_{jk\ell} = 0$.
On the other hand, if $a_{jk\ell} = 0$, then we have two possibilities: $b_{jk\ell}=1, c_{jk\ell} = 0$ and $b_{jk\ell} = 0, c_{jk\ell} = 1$.
The number of the $1$-number-conserving $3$-VFCA rules is computed as $9 \times 2^{18}$.
In the same way, we can obtain the conditions for $b_{jk\ell}$ (resp.~$c_{jk\ell}$) for 
the $2$-number conserving (resp.~the $3$-number-conserving) rules, which are shown in Appendix B.
\par
To find complete number-conserving rules,
we seek local rules satisfying both the conditions for $a_{jk\ell}$ and for $b_{jk\ell}$.
Note that if $\nu^{t}(1)$ and $\nu^{t}(2)$ are additive conserved quantities, so is $\nu^{t}(3)$.
A pair of $(a_{111},a_{112},\dots,a_{333})$ and $(b_{111},b_{112},\dots,b_{333})$ gives the local rule of a $3$-VFCA
if and only if $a_{jk\ell} + b_{jk\ell} \leq 1$ for all $1 \leq j,k,\ell \leq 3$.
Hence, checking $9 \times 9$ pairs of $(a_{111},a_{112},\dots,a_{333})$ and $(b_{111},b_{112},\dots,b_{333})$,
we obtain 15 complete number-conserving rules, shown in Appendix C.
\section{Concluding remarks}
In this paper, we first introduce the vector representation of $q$-state $n$-neighbor CA called $q$-VCA.
The $q$ states are assigned to the standard basis vectors $\bm{e}_{1},\bm{e}_{2},\dots,\bm{e}_{q}$ of $\mathbb{R}^{q}$ and
the local rule $h$ can be expressed by a tuple of $q$ polynomials that are homogeneous of degree $n$.
Then, we consider CA with the states in the convex hull of $\bm{e}_{1},\bm{e}_{2},\dots,\bm{e}_{q}$,
where we naturally expand the vector-valued map $h$ via the polynomial expression.
We call them $q$-VFCA.
If $q=2$, the local rule for $2$-VFCA is equivalent to that for elementary fuzzy CA
obtained from the disjunctive normal form in boolean operations.
We also explain how to enumerate the number-conserving rules of these vector-valued CA.
If we focus on the additive quantity 
$F(\bm{x}_{i}^{t}) = \sum_{k=1}^{q} (k-1)[\bm{x}_{i}^{t}]_{k}$,
we can discuss
the usual number-conserving rules~\cite{Fuks}, which conserve the number of particles, that is, the sum of the values.
It is a future work to investigate the properties of $q$-VFCA, 
such as the asymptotic behavior.
\appendix
\section{Solving equations for $1$-number-conserving rules}
Comparing the coefficients of \eqref{eq-consv1}, we have the following 19 equations.
\begin{align}
&a_{333} = 0, \label{eq-a1} \\
&a_{133} + a_{313} + a_{331} - 3a_{333} = 1, \label{eq-a2} \\
&a_{233} + a_{323} + a_{332} - 3a_{333} = 0, \label{eq-a3} \\
&a_{113} - a_{133} + a_{311} - 2a_{313} - a_{331} + 2a_{333} = 0, \label{eq-a4} \\
&a_{213} - a_{233} - a_{313} + a_{321} - a_{323} - a_{331} + 2a_{333} = 0, \label{eq-a5} \\
&a_{123} - a_{133} + a_{312} - a_{313} -a_{323} - a_{332} + 2a_{333} = 0, \label{eq-a6} \\
&a_{223} - a_{233} + a_{322} - 2a_{323} - a_{332} + 2a_{333} = 0, \label{eq-a7} \\
&a_{131} - a_{133} -a_{331} + a_{333} = 0, \label{eq-a8} \\
&a_{231} - a_{233} - a_{331} + a_{333} = 0, \label{eq-a9} \\
&a_{132} - a_{133} - a_{332} + a_{333} = 0, \label{eq-a10} \\
&a_{232} - a_{233} - a_{332} + a_{333} = 0, \label{eq-a11} \\
&a_{111} - a_{113} - a_{131} + a_{133} - a_{311} + a_{313} + a_{331} - a_{333} = 0, \label{eq-a12} \\
&a_{211} - a_{213} - a_{231} + a_{233} - a_{311} + a_{313} + a_{331} - a_{333} = 0, \label{eq-a13} \\
&a_{121} - a_{123} - a_{131} + a_{133} - a_{321} + a_{323} + a_{331} - a_{333} = 0, \label{eq-a14} \\
&a_{221} - a_{223} - a_{231} + a_{233} - a_{321} + a_{323} + a_{331} - a_{333} = 0, \label{eq-a15} \\
&a_{112} - a_{113} - a_{132} + a_{133} - a_{312} + a_{313} + a_{332} - a_{333} = 0, \label{eq-a16} \\
&a_{212} - a_{213} - a_{232} + a_{233} - a_{312} + a_{313} + a_{332} - a_{333} = 0, \label{eq-a17} \\
&a_{122} - a_{123} - a_{132} + a_{133} - a_{322} + a_{323} + a_{332} - a_{333} = 0, \label{eq-a18} \\
&a_{222} - a_{223} - a_{232} + a_{233} - a_{322} + a_{323} +a_{332} - a_{333} = 0. \label{eq-a19} 
\end{align}
From \eqref{eq-a1}, \eqref{eq-a3}, \eqref{eq-a7}, \eqref{eq-a11} and \eqref{eq-a19},
we have 
\begin{align*}
a_{222} = a_{223} = a_{232} = a_{233} = a_{322} = a_{323} = a_{332} = a_{333} = 0.
\end{align*}
We consider three cases for \eqref{eq-a2}.
\begin{enumerate}
\item $a_{133} = 1$ and $a_{313} = a_{331} = 0$.\\
We have
\begin{align*}
a_{111} = a_{131} = a_{132} = 1, \quad a_{213} = a_{221} = a_{231} = a_{321} = 0
\end{align*}
from \eqref{eq-a5}, \eqref{eq-a8}, \eqref{eq-a9}, \eqref{eq-a10}, \eqref{eq-a12} and \eqref{eq-a15}.
Now, equations \eqref{eq-a4} and \eqref{eq-a6} turn to
\begin{align*}
a_{113} + a_{311} = 1, \quad a_{123} + a_{312} = 1.
\end{align*}
Hence, we have four possibilities for $(a_{113}, a_{311}, a_{123}, a_{312})$.
If  $a_{113}=1, a_{311}=0, a_{123}=0$ and $a_{312} = 1$, equation \eqref{eq-a16} implies $a_{112} = 2$, leading to a contradiction.
Thus, we obtain the following three cases, where $a_{112}, a_{121}, a_{122}, a_{211}$ and $a_{221}$ are
uniquely determined by \eqref{eq-a13}, \eqref{eq-a14}, \eqref{eq-a16}, \eqref{eq-a17} and \eqref{eq-a18}.
\begin{align*}
&(a_{113}, a_{311}, a_{123}, a_{312}, a_{112}, a_{121}, a_{122}, a_{211}, a_{221}) \\
=& \begin{cases}
(1,0,1,0,1,1,1,0,0), \\
(0,1,1,0,0,1,1,1,0), \\
(0,1,0,1,1,0,0,1,1).
\end{cases}
\end{align*}
\item $a_{313} = 1$ and $a_{133} = a_{331} = 0$.\\
We have
\begin{align*}
a_{111} = a_{113} = a_{311} = 1, \quad a_{131} = a_{231} = a_{132} = 0
\end{align*}
from \eqref{eq-a4}, \eqref{eq-a8}, \eqref{eq-a9}, \eqref{eq-a10} and \eqref{eq-a12}.
Equations \eqref{eq-a5} and \eqref{eq-a6} turn to
\begin{align*}
a_{213} + a_{321} = 1, \quad a_{123} + a_{312} = 1.
\end{align*}
Again, we have four possibilities for $(a_{213}, a_{321}, a_{123}, a_{312})$.
If  $a_{213}=0, a_{321}=1, a_{123}=1$ and $a_{312} = 0$, equation \eqref{eq-a18} implies $a_{112} = -1$, leading to a contradiction.
Thus, we obtain the following three cases.
\begin{align*}
&(a_{213}, a_{321}, a_{123}, a_{312}, a_{112}, a_{121}, a_{122}, a_{211}, a_{212}, a_{221}) \\
=& \begin{cases}
(1,0,1,0,0,1,1,1,0,0), \\
(1,0,0,1,1,0,0,1,1,0), \\
(0,1,0,1,1,1,0,0,0,1).
\end{cases}
\end{align*}
\item $a_{331} = 1$ and $a_{133} = a_{313} = 0$.\\
In the same way, we have
\begin{align*}
a_{111} = a_{131} = a_{231} = 1, \quad a_{122} = a_{123} = a_{132} = a_{312} = 0
\end{align*}
from \eqref{eq-a6}, \eqref{eq-a8}, \eqref{eq-a9}, \eqref{eq-a10}, \eqref{eq-a12} and \eqref{eq-a18}, and
\begin{align*}
&(a_{113}, a_{311}, a_{213}, a_{321}, a_{112}, a_{121}, a_{211}, a_{212}, a_{221}) \\
= &\begin{cases}
(1,0,1,0,1,0,1,1,0), \\
(1,0,0,1,1,1,0,0,1), \\
(0,1,0,1,0,1,1,0,1).
\end{cases}
\end{align*}
from other equations.
\end{enumerate}
Hence, we have the following nine solutions.
\begin{align} \label{solA}
\notag &(a_{111},a_{112},\ldots,a_{333})=\\
&\qquad\begin{cases}
(1, 1, 1, 1, 1, 1, 1, 1, 1, 0, 0, 0, 0, 0, 0, 0, 0, 0, 0, 0, 0, 0, 0, 0, 0, 0, 0) ,\\
(1, 1, 1, 1, 0, 0, 1, 0, 0, 0, 0, 0, 1, 0, 0, 1, 0, 0, 0, 0, 0, 1, 0, 0, 1, 0, 0) ,\\
(1, 1, 1, 1, 0, 0, 0, 0, 0, 0, 0, 0, 1, 0, 0, 0, 0, 0, 1, 1, 1, 1, 0, 0, 0, 0, 0) ,\\
(1, 1, 1, 0, 0, 0, 1, 0, 0, 1, 1, 1, 0, 0, 0, 1, 0, 0, 0, 0, 0, 0, 0, 0, 1, 0, 0) ,\\
(1, 1, 1, 0, 0, 0, 0, 0, 0, 1, 1, 1, 0, 0, 0, 0, 0, 0, 1, 1, 1, 0, 0, 0, 0, 0, 0) ,\\
(1, 1, 0, 0, 0, 0, 1, 1, 1, 1, 1, 0, 0, 0, 0, 0, 0, 0, 1, 1, 0, 0, 0, 0, 0, 0, 0) ,\\
(1, 0, 1, 1, 1, 1, 0, 0, 0, 1, 0, 1, 0, 0, 0, 0, 0, 0, 1, 0, 1, 0, 0, 0, 0, 0, 0) ,\\
(1, 0, 0, 1, 1, 1, 1, 1, 1, 1, 0, 0, 0, 0, 0, 0, 0, 0, 1, 0, 0, 0, 0, 0, 0, 0, 0) ,\\
(1, 0, 0, 1, 0, 0, 1, 0, 0, 1, 0, 0, 1, 0, 0, 1, 0, 0, 1, 0, 0, 1, 0, 0, 1, 0, 0).
\end{cases} 
\end{align}
\section{Solutions for other number-conserving rules}
A rule of $3$-VFCA is $2$-number-conserving if and only if the coefficients of the local rule \eqref{eq-fuzzylocal} satisfies
\begin{align} \label{solB}
\notag &(b_{111},b_{112},\ldots,b_{333})=\\ 
&\qquad\begin{cases}
(0, 1, 0, 0, 1, 0, 0, 1, 0, 0, 1, 0, 0, 1, 0, 0, 1, 0, 0, 1, 0, 0, 1, 0, 0, 1, 0) ,\\
(0, 1, 0, 0, 0, 0, 0, 1, 0, 0, 1, 0, 1, 1, 1, 0, 1, 0, 0, 1, 0, 0, 0, 0, 0, 1, 0) ,\\
(0, 1, 0, 0, 0, 0, 0, 0, 0, 0, 1, 0, 1, 1, 1, 0, 0, 0, 0, 1, 0, 1, 1, 1, 0, 0, 0) ,\\
(0, 0, 0, 1, 1, 1, 0, 1, 0, 0, 0, 0, 1, 1, 1, 0, 1, 0, 0, 0, 0, 0, 0, 0, 0, 1, 0) ,\\
(0, 0, 0, 1, 1, 1, 0, 0, 0, 0, 0, 0, 1, 1, 1, 0, 0, 0, 0, 0, 0, 1, 1, 1, 0, 0, 0) ,\\
(0, 0, 0, 1, 1, 0, 0, 0, 0, 0, 0, 0, 1, 1, 0, 1, 1, 1, 0, 0, 0, 1, 1, 0, 0, 0, 0) ,\\
(0, 0, 0, 0, 1, 1, 0, 0, 0, 1, 1, 1, 0, 1, 1, 0, 0, 0, 0, 0, 0, 0, 1, 1, 0, 0, 0) ,\\
(0, 0, 0, 0, 1, 0, 0, 0, 0, 1, 1, 1, 0, 1, 0, 1, 1, 1, 0, 0, 0, 0, 1, 0, 0, 0, 0) ,\\
(0, 0, 0, 0, 0, 0, 0, 0, 0, 1, 1, 1, 1, 1, 1, 1, 1, 1, 0, 0, 0, 0, 0, 0, 0, 0, 0).
\end{cases}
\end{align}
A rule of $3$-VFCA is $3$-number-conserving if and only if the coefficients of the local rule \eqref{eq-fuzzylocal} satisfies
\begin{align*}
&(c_{111},c_{112},\ldots,c_{333})=\\ 
&\qquad\begin{cases}
(0, 0, 1, 0, 0, 1, 0, 0, 1, 0, 0, 1, 0, 0, 1, 0, 0, 1, 0, 0, 1, 0, 0, 1, 0, 0, 1) ,\\
(0, 0, 1, 0, 0, 1, 0, 0, 0, 0, 0, 1, 0, 0, 1, 0, 0, 0, 0, 0, 1, 0, 0, 1, 1, 1, 1) ,\\
(0, 0, 1, 0, 0, 0, 0, 0, 0, 0, 0, 1, 0, 0, 0, 1, 1, 1, 0, 0, 1, 0, 0, 0, 1, 1, 1) ,\\
(0, 0, 0, 0, 0, 1, 1, 1, 1, 0, 0, 0, 0, 0, 1, 0, 0, 0, 0, 0, 0, 0, 0, 1, 1, 1, 1) ,\\
(0, 0, 0, 0, 0, 0, 1, 1, 1, 0, 0, 0, 0, 0, 0, 1, 1, 1, 0, 0, 0, 0, 0, 0, 1, 1, 1) ,\\
(0, 0, 0, 0, 0, 0, 1, 0, 1, 0, 0, 0, 0, 0, 0, 1, 0, 1, 0, 0, 0, 1, 1, 1, 1, 0, 1) ,\\
(0, 0, 0, 0, 0, 0, 0, 1, 1, 0, 0, 0, 0, 0, 0, 0, 1, 1, 1, 1, 1, 0, 0, 0, 0, 1, 1) ,\\
(0, 0, 0, 0, 0, 0, 0, 0, 1, 0, 0, 0, 0, 0, 0, 0, 0, 1, 1, 1, 1, 1, 1, 1, 0, 0, 1) ,\\
(0, 0, 0, 0, 0, 0, 0, 0, 0, 0, 0, 0, 0, 0, 0, 0, 0, 0, 1, 1, 1, 1, 1, 1, 1, 1, 1).
\end{cases}
\end{align*}
\section{Complete number-conserving rules}
Table \ref{tab-consvcomp} shows the rule numbers and the local rules of complete number-conserving $3$-VFCA.
The rule numbers of $3$-VCA or $3$-VFCA are determined as follows.
Let $h: \{\bm{e}_1,\bm{e}_2,\bm{e}_3\}^3 \to \{\bm{e}_1,\bm{e}_2,\bm{e}_3\}$ be the local rule of $3$-VCA.
Then, we consider the sequence 
\begin{align*}
h(\bm{e}_{1}, \bm{e}_{1}, \bm{e}_{1}) h(\bm{e}_{1}, \bm{e}_{1}, \bm{e}_{2}) \cdots h(\bm{e}_{3}, \bm{e}_{3}, \bm{e}_{3}).
\end{align*}
Replacing $\bm{e}_{1}$ with $2$, $\bm{e}_{2}$ with $1$, and $\bm{e}_{3}$ with $0$,
we have 27 digits ternary number.
We can compute the rule number in the decimal number by converting this ternary number.
The rule numbers of $3$-VFCA are the same as those of the corresponding $3$-VCA.
We also express the rule numbers in the $27$-adic numbers, which are expressed by 
$0,1,\dots,9,\rm{a},\rm{b},\dots,\rm{q}$,
 since they seem to capture the combinatorial characteristics of complete number-conserving rules.
The pair $(\mu,\nu)$ in the table means that
the rule is determined by the $\mu$th solution in \eqref{solA} and the $\nu$th solution in \eqref{solB}.
\begin{table}
\begin{center}
\begin{tabular}{|lll|}\hline
decimal rule number  &  $27$-adic rule number  &  pair\\
\multicolumn{3}{|l|}{ \qquad local rule} \\ \hline\hline
$7479532539765$  &  $\rm{qd0qd0qd0}$  &  $(5,5)$\\
\multicolumn{3}{|l|}{ \qquad $(y_1,y_2,y_3)^{\top}$} \\ \hline
$6159136430181$  &  $\rm{\ell \ell \ell \ell \ell \ell \ell \ell \ell}$  &  $(9,1)$\\
\multicolumn{3}{|l|}{ \qquad $(z_1,z_2,z_3)^{\top}$} \\
$7625403764901$  &  $\rm{qqqddd000}$  &  $(1,9)$\\
\multicolumn{3}{|l|}{ \qquad $(x_1,x_2,x_3)^{\top}$} \\ \hline
$6768185473053$  &  $\rm{nq0nd0nd0}$  &  $(7,3)$\\
\multicolumn{3}{|l|}{ \qquad $(x_1y_2+y_1z_1+y_1z_3,x_2y_2+x_3y_2+y_1z_2,y_3)^{\top}$} \\
$6924717700245$  &  $\rm{odqod0od0}$  &  $(6,5)$\\
\multicolumn{3}{|l|}{ \qquad $(x_1y_3+y_1z_1+y_1z_2,y_2,x_2y_3+x_3y_3+y_1z_3)^{\top}$} \\
$7469071910973$  &  $\rm{qc0qcdqc0}$  &  $(5,6)$\\
\multicolumn{3}{|l|}{ \qquad $(y_1,x_2y_3+y_2z_1+y_2z_2,x_1y_3+x_3y_3+y_2z_3)^{\top}$} \\
$7480694859933$  &  $\rm{qd3qd3q03}$  &  $(5,4)$\\
\multicolumn{3}{|l|}{ \qquad $(y_1,x_1y_2+x_2y_2+y_3z_2,x_3y_2+y_3z_1+y_3z_3)^{\top}$} \\
$7486506443925$  &  $\rm{qdiqdi0di}$  &  $(4,5)$\\
\multicolumn{3}{|l|}{ \qquad $(x_1y_1+x_2y_1+y_3z_1,y_2,x_3y_1+y_3z_2+y_3z_3)^{\top}$} \\
$7573493966013$  &  $\rm{qm0dm0qm0}$  &  $(3,7)$\\
\multicolumn{3}{|l|}{ \qquad $(x_1y_1+x_3y_1+y_2z_1,x_2y_1+y_2z_2+y_2z_3,y_3)^{\top}$} \\ \hline
$6213370633533$  &  $\rm{\ell qq \ell d0 \ell d0}$  &  $(8,3)$\\
\multicolumn{3}{|l|}{ \qquad $(x_1y_2+x_1y_3+y_1z_1,x_2y_2+x_3y_2+y_1z_2,x_2y_3+x_3y_3+y_1z_3)^{\top}$} \\
$6769347793221$  &  $\rm{nq3nd3n03}$  &  $(7,2)$\\
\multicolumn{3}{|l|}{ \qquad $(x_1y_2+y_1z_1+y_1z_3,x_2y_2+y_1z_2+y_3z_2,x_3y_2+y_3z_1+y_3z_3)^{\top}$} \\
$6914257071453$  &  $\rm{ocqocdoc0}$  &  $(6,6)$\\ 
\multicolumn{3}{|l|}{ \qquad $(x_1y_3+y_1z_1+y_1z_2,x_2y_3+y_2z_1+y_2z_2,x_3y_3+y_1z_3+y_2z_3)^{\top}$} \\ 
$7487668764093$  &  $\rm{qd \ell qd \ell 00 \ell}$  &  $(4,4)$\\
\multicolumn{3}{|l|}{ \qquad $(x_1y_1+x_2y_1+y_3z_1,x_1y_2+x_2y_2+y_3z_2,x_3y_1+x_3y_2+y_3z_3)^{\top}$} \\
$7563033337221$  &  $\rm{q \ell 0d \ell dq \ell 0}$  &  $(3,8)$\\
\multicolumn{3}{|l|}{ \qquad $(x_1y_1+x_3y_1+y_2z_1,x_2y_1+x_2y_3+y_2z_2,x_1y_3+x_3y_3+y_2z_3)^{\top}$} \\
$7580467870173$  &  $\rm{qmidmi0mi}$  &  $(2,7)$\\
\multicolumn{3}{|l|}{ \qquad $(x_1y_1+y_2z_1+y_3z_1,x_2y_1+y_2z_2+y_2z_3,x_3y_1+y_3z_2+y_3z_3)^{\top}$} \\ \hline
\end{tabular}
\end{center}
\caption{Complete number-conserving rules of $3$-VFCA.} \label{tab-consvcomp}
\end{table}

\end{document}